\newtheorem{theorem}{Theorem}[section]
\newtheorem{corollary}[theorem]{Corollary}
\newtheorem{lemma}[theorem]{Lemma}
 \numberwithin{equation}{subsection}
\newcommand{\N}{\ensuremath{\mathbb{N}}}
\newcommand{\heading}[1]{
  \medskip
  \noindent \textbf{#1}
}
\begin{document}

\author{Marek Cygan\thanks{Dept. of Mathematics, Computer Science and Mechanics, University of Warsaw, Poland} \and
Marcin Pilipczuk \footnotemark[1] \and
Jakub Onufry Wojtaszczyk \thanks{Institute of Mathematics, Polish Academy of Sciences}
}

\title{Beyond $O^*(2^n)$ in domination-type problems}

\maketitle





\newcommand{\maxirsetname}{ {\sc{Largest Irredundant Set}}}
\newcommand{\minirsetname}{ {\sc{Smallest Inclusion-Maximal Irredundant Set}}}
\newcommand{\maxtime}{$O(1.9657^n)$}
\newcommand{\cdstime}{$O(1.89^n)$}
\newcommand{\mintime}{$O(1.999956^n)$}

\newcommand{\capdomname}{ {\sc{Capacitated Dominating Set}}}
\newcommand{\capsimplename}{ {\sc{Simplified Capacitated Dominating Set}}}
\newcommand{\domname}{ {\sc{Dominating Set}}}
\newcommand{\capdomtime}{$O(1.89^n)$}

\begin{abstract}
  In this paper we provide algorithms faster than $O^*(2^n)$ for several NP-complete domination-type problems.
  More precisely, we provide:
  \begin{itemize}
  \item an algorithm for \capdomname{} that solves it in \capdomtime{},
  \item a branch-and-reduce algorithm solving \maxirsetname{} in \maxtime{} time,
  \item and a simple iterative-DFS algorithm for \minirsetname{} that solves it in \mintime{} time.
  \end{itemize}
  We also provide an exponential approximation scheme for \capdomname{}. All algorithms require polynomial space.
  Despite the fact that the discussed problems are quite similar to the {\sc{Dominating Set}} problem, we are not aware of any
  published algorithms solving these problems faster than the obvious $O^*(2^n)$ solution prior to this paper.
\end{abstract}

\maketitle

\newcommand{\cN}{\ensuremath{\bar{N}}}

\section{Introduction}

\heading{Notation and definitions.}
Let $G=(V,E)$ be an undirected graph. Given $F \subset E$ we write
$V(F)$ to denote the set of all endpoints of the edges in $F$.
Given $W \subset V$ by $G[W]$ we denote the subgraph induced by $W$.

We say a vertex $v \in V$ dominates $u \in V$ if $u = v$ or $uv \in E$,
i.e. a vertex dominates itself and all its neighbours.
By $\cN(v) = \{v\} \cup \{u : uv \in E\}$ we denote the set of vertices dominated by $v$.
We extend this notation to any subset $W \subset V$ by putting $\cN(W) = \bigcup_{v \in W} \cN(v)$,
we say that a set $W$ dominates a vertex $u$ if $u \in \cN(W)$. The set $\cN(W)$ is
called the {\em{closed neighbourhood}} of $W$.

The \domname{} problem asks for the smallest set that dominates the whole $V$.
In the \capdomname{} problem each vertex $v$ is additionally equipped with a number $c(v)$, which is the number of other vertices this vertex can dominate. 
Formally, we say that a set $S \subset V$ is a capacitated dominating set if there exists
$f_S:V \setminus S \to S$ such that $f_S(v)$ is a neighbour of $v$ for each $v \in V \setminus S$ and
$|f_S^{-1}(w)| \leq c(w)$ for each $w \in S$. The function $f_S$ is called a {\em{dominating function}} for the set $S$.
The \capdomname{} problem asks for the smallest possible size of a capacitated dominating set.
Note that for a given set $S$ checking whether it is a capacitated dominating set is a polynomial--time problem
which can be solved using max--flow or maximum matching techniques.

We say a set $S \subset V$ is {\em irredundant} if for any $v \in S$
there exists a vertex $u \in V$ such that $v$ dominates $u$ and $S \setminus \{v\}$ does not dominate $u$.
We call any such vertex $u$ a {\em unique} vertex for $v$.
An irredundant set is called {\em inclusion--maximal}
if it is not a proper subset of any other irredundant set.
Note that an inclusion--maximal irredundant set does not necessarily have to dominate the whole vertex set of $G$, for examples see \cite{irsetbook}.

The irredundance numbers $ir(G)$ and $IR(G)$ are defined as the smallest and largest possible cardinality of an inclusion--maximal irredundant set.
The problems \minirsetname{} and \maxirsetname{} ask for $ir(G)$ and $IR(G)$ respectively.

\heading{Our results.} In Section \ref{sec:capdom} we provide an algorithm which solves the \capdomname{} problem in
\capdomtime{}. The algorithm constructs $O^*(\binom{n}{n/3}) = O(1.89^n)$
reductions of the input graph into a \capsimplename{} problem instance (defined in Section \ref{ssec:poly}),
each solvable in polynomial time. Section \ref{ssec:capdomapx} tackles the exponential approximation of \capdomname{}.
More precisely, we provide an approximation algorithm that for a given $c \in (0, \frac{1}{3})$, in time $O^*(\binom{n}{cn}) = O^*((1/(c^c(1-c)^{1-c}))^n)$,
computes a $(\frac{1}{4c}+c)$-approximation in the case of $c < \frac{1}{4}$ or a $(2-3c)$-approximation in the case of $\frac{1}{4} \leq c < \frac{1}{3}$.

In Section \ref{sec:big} we provide a branch-and-reduce algorithm
that solves \maxirsetname{} in \maxtime{}. In Section \ref{sec:small} we consider
simple iterative depth-first search algorithm that solves \minirsetname{} and prove that
it breaks slightly the $O^*(2^n)$ barrier: we show that it works in \mintime{} time.

All algorithms in this paper require polynomial space.

\heading{Motivation and previous work.} The field of exact exponential-time algorithms
for NP-hard problems has attracted a lot of attention in the recent years (see Woeginger's survey~\cite{woeginger}).
Many difficult problems can be solved much faster than by the obvious brute-force algorithm;
examples are {\sc Independent Set} \cite{fgk:mis}, {\sc Dominating Set} \cite{fgk:domset,rooij:domsetesa09}
, {\sc Chromatic Number} \cite{bjohus:color} or {\sc Bandwidth} \cite{naszewg,feige:exp}.
A few powerful techniques have been developed, including Measure \& Conquer \cite{fgk:domset} and
inclusion/exclusion principle applications \cite{bjohus:color,bjohus:fourier,nederlof:moebius}.
However, there is still a bunch of problems for which no faster solution than the obvious one is known.
These include {\sc Subgraph Isomorphism} and {\sc Chromatic Index} which are mentioned as
open problems in~\cite{fomin:icalp09,dagstuhl}.

Superpolynomial approximation was recently considered as a way of coping with hardness of approximation of different NP-hard problems.
Results in this field include subexponential approximation algorithm for {\sc Bandwidth} on trees \cite{naszicalp,feigetalwar:bw} and
exponential approximation schemes for {\sc Chromatic Number} \cite{francuzy, kowalik:apx} or {\sc Bandwidth} on arbitrary graphs \cite{naszicalp, gaspers:bw}.

Finding an algorithm faster than $O^*(2^n)$ for {\sc Dominating Set} was an open problem until 2004. Currently
the fastest algorithm by van Rooij et al. runs in $O(1.5048^n)$ \cite{rooij:domsetesa09}. However, many other
graph-dominating problems still lack faster solutions. The problems of solving \capdomname{},
\minirsetname{} and \maxirsetname{} faster than obvious $O^*(2^n)$ algorithms were posted by van Rooij in 2008 \cite{dagstuhl},
\capdomname{} was also posted as an open problem at {\sc IWPEC 2008}.
In this paper we present algorithms providing answers to those problems.

Note that at first glance breaking $O^*(2^n)$ barrier for the \capdomname{} problem seems a hard task, since even
the brute-force $O^*(2^n)$ algorithm involves matching or max--flow techniques. Moreover,
from the parameterized point of view, Dom et al. \cite{dom:capdomparam} showed that this problem
is $W[1]$-hard when parameterized by both treewidth and solution size, and Bodlaender et al. \cite{blp:capdomplanar} showed
that even the planar version parameterized by the solution size is also $W[1]$-hard.

Our algorithm for \capdomname{} is somewhat similar to one of the first algorithms to break $O^*(2^n)$ for the classical
{\sc Dominating Set} problem, namely the algorithm of Randerath and Schiermeyer \cite{schiermeyer:domset}. Their algorithm
also involves matching arguments and our algorithm, applied to {\sc Dominating Set}, can be viewed as a simplification
of their algorithm. However we do not know whether their algorithm could be used to solve the \capdomname{} problem.

Downey et al. \cite{downey:irset} showed that the problem of finding an irredundant set of size $k$
is $W[1]$-hard when parameterized by $k$ and FPT when parameterized by $n-k$.
However, Raman and Saurabh proved that this problem become FPT in graphs without small cycles \cite{raman-saket:irset}.
It is worth noticing that irredundant sets attract some attention from
the graph-theoretical point of view \cite{irsetbook}.

There is a large number of problems for which the fastest current known algorithm requires
$O^*(2^n)$ time, including {\sc Chromatic Number} and {\sc TSP}.
Koivisto \cite{dagstuhl} conjectured
that there might exist some negative results, i.e. reductions in the following manner:
if one problem can be solved in $O^*(c^n)$ time for some $c <2$, then another problem can be solved
in $O^*(d^n)$ time for some $d<2$.
This paper shows that the problems considered by us are somewhat easier and admit solutions faster than $O^*(2^n)$.
We would like to note that our techniques in Section \ref{sec:small} seem a bit similar to those
that lead to $O^*((2-\varepsilon)^n)$-time algorithms for {\sc Dominatic Number} \cite{bjorklund:trimmed} and {\sc TSP} \cite{bjorklund:tsp}
in graphs of bounded degree.
We think it is interesting and somewhat surprising that such techniques can be used in graphs without any degree assumption.

Recently it came to our attention that Brankovic et al.~\cite{henning:irr} were
independently working on the irredundance problems but we are not aware of the details of their results.

\heading{Acknowledgements} We would like to sincerely thank George R. R. Martin for his Song of Ice and Fire novels and O. S. Card for his Ender's series --- 
they have given us a number of enjoyable and entertaining breaks while we were working on this paper.

\section{\capdomname}\label{sec:capdom}

\subsection{\capsimplename{}}\label{ssec:poly}

In this section we introduce a simplification of the \capdomname{} problem, namely the \capsimplename{} problem, which can be solved in polynomial time.

The input of \capsimplename{} is an undirected graph $G = (V, E)$, a set $U \subset V$ and a capacity function $c:V \to \N$.
We ask for the smallest capacitated dominating set $S \subset G$ containing $U$ such that there exists a dominating function $f_S$ satisfying
\begin{equation}\label{capwar}
|f_S^{-1}(v)| \leq 1 \quad {\rm for\ each} \quad v \in V \setminus U.
\end{equation}

Let $G=(V, E)$ with $U \subset V$ and a capacity function $c:V \to \N$ be a \capsimplename{} instance.
Consider the following graph $G' = (V',E')$:
\begin{itemize}
\item for any $v \in V \setminus U$ we have $v \in V'$;
\item for any $v \in U$ we have $c(v)$ copies $v_1, v_2, \ldots, v_{c(v)}$ of $v$ in $V'$;
\item for any $v \in V \setminus U$ and $u \in U$ the edge $uv_i \in E'$ for all $i$ iff $uv \in E$;
\item for $v,w \in V \setminus U$ we have $vw \in E'$ iff $vw \in E$ and $c(v) + c(w) > 0$;
\item there are no edges of the form $v_iw_j$ or $v_iv_j$ for $v,w \in U$.
\end{itemize}

\begin{figure}[htbp]
\begin{center}
{\small{
\begin{tikzpicture}[scale=1.0]
\begin{scope}[]
\foreach \x in {-0cm}
    {
      \draw (\x, 1.8cm) -- (\x-1.25cm, 2cm) -- (\x+1.25cm, 2cm) -- (\x, 1.8cm);
      \fill (\x, 1.8cm) circle (0.1cm);
      \fill (\x-1.25cm, 2cm) circle (0.1cm);
      \fill (\x+1.25cm, 2cm) circle (0.1cm);
      \draw (0,0) -- (\x-1.25cm,2cm);
      \draw (0,0) -- (\x+1.25cm,2cm);
      \draw (0,0) -- (\x,1.8cm);1
    }
    \draw (0cm,2.35cm) node {(D,2)};
    \draw (-1.25cm,2.35cm) node {(E,0)};
    \draw (1.25cm,2.35cm) node {(F,0)};

    \draw(0,-0.35) node {(A,2)};
    \fill (0,0) circle (0.1);
    \draw (0,0) -- (-1.3cm,0);
    \fill (-1.25cm,0) circle (0.1cm);
    \draw (-1.25cm,-0.35cm) node {(H,0)};

   \fill (1.25,0) circle(0.1);
   \draw (1.25,-0.35) node{(B,3)};
   \draw(0,0) -- (2.5,0);
   \fill (2.5,0) circle (0.1);
   \draw (2.5,-0.35) node {(C,2)};
  
    \draw (2.5,0) -- (2.5cm ,1cm);
     \fill (2.5cm,1cm) circle (0.1cm);
     \draw (2.5cm,1.4cm) node {(K,0)};
     \draw (1.3cm,2cm) -- (2.5cm,1cm);
     \draw (1.3cm, 2cm) -- (2.5cm, 2cm);
     \fill (2.5cm,2cm) circle (0.1cm);
     \draw (2.5cm,2.35cm) node {(M,0)};
   
     \draw (2.5cm,0) -- (3.75cm,2cm);
     \fill (3.75cm,2cm) circle (0.1cm);

     \draw (3.75cm,2cm) -- (2.5cm,2cm);
    \draw (3.75cm,2.35cm) node {(L,1)};
    
    \foreach \x/\y in {1.25cm/0cm, 2.5cm/0cm, 0cm/0cm, 0cm/1.8cm, 3.75cm/2cm}
    {
      \draw (\x,\y) circle (0.18cm);
    }

   \draw[very thick, dashed, ->] (3.5, 1.0) -- (5.5, 1.0); 
   \draw(4.5cm,0.6cm) node {U=\{A,B,C\}};
\end{scope}
\begin{scope}[xshift=6.0cm]
    \draw[dashed] (1cm,2cm) -- (3cm,2cm);
    \draw[dashed] (5cm,2cm) -- (6cm, 2cm);
    \foreach \x in {0cm, 1cm}
    {
      \foreach \xb in {0cm, 1cm, 2cm, 3cm} 
      {
         \draw[dashed] (\x,0cm) -- (\xb,2cm);
      }
    }

 \foreach \x in {5cm, 6cm}
    {
      \foreach \xb in {4cm, 6cm} 
      {
         \draw[dashed] (\x,0cm) -- (\xb,2cm);
      }
    } 

  \foreach \xa/\xb in {0cm/0cm, 1cm/1cm, 5cm/4cm}
  {
    \draw[white,very thick] (\xa,0cm) -- (\xb, 2cm);
  }
  \draw[white, very thick] (6cm,2cm) -- (5cm, 2cm);
  \draw[white, very thick] (2cm,2cm) -- (3cm, 2cm);

  \foreach \xa/\xb in {0cm/0cm, 1cm/1cm, 5cm/4cm}
  {
    \draw[very thick] (\xa,0cm) -- (\xb, 2cm);
  }
  \draw[very thick] (6cm,2cm) -- (5cm, 2cm);
  \draw[very thick] (2cm,2cm) -- (3cm, 2cm);

	\foreach \x/\y in {0cm/H', 1cm/E', 2cm/D', 3cm/F', 4cm/K', 5cm/M', 6cm/L'}
    {
      \fill (\x, 2cm) circle (0.1cm);
      \draw(\x,2.25cm) node {\y};
    }

	\foreach \x/\y in {0cm/$A_1'$, 1cm/$A_2'$, 2cm/$B_1'$, 3cm/$B_2'$, 4cm/$B_3'$, 5cm/$C_1'$, 6cm/$C_2'$}
    {
      \fill (\x, 0cm) circle (0.1cm);
      \draw(\x,-0.35cm) node {\y};
    }

\end{scope}
\end{tikzpicture}

}}
\caption{From the simple capacitated dominating set $\{A,B,C,D,L\}$ to a matching.}
\label{fig:capreduction}
\end{center}
\end{figure}
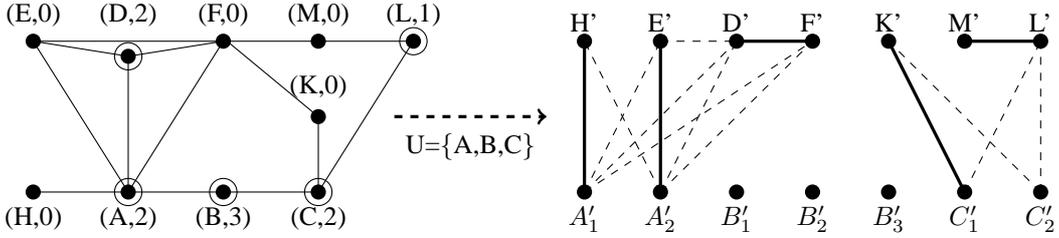

We show a correspondence between feasible solutions of \capsimplename{} in $G$ and matchings in $G'$.

\begin{lemma}\label{lem:w1}
Let $S$ be a capacitated dominating set in $G$ with dominating function $f_S$ satisfying
condition \ref{capwar}. Then one may construct in polynomial time a matching $\phi(S,f_S)$ in $G'$ satisfying
$|V| - |\phi(S, f_S)| = |S|$.
\end{lemma}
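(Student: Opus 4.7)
The plan is to build the matching $\phi(S,f_S)$ directly from $f_S$: contribute exactly one edge for each vertex of $V\setminus S$, so that the cardinality $|\phi(S,f_S)| = |V\setminus S| = |V|-|S|$ is forced automatically. This reduces the task to three routine checks: specifying the edges, verifying they lie in $E'$, and verifying that no vertex of $V'$ is covered twice.

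For the construction itself, split $V\setminus S$ according to the image of $f_S$. For each $u\in U$, pick an arbitrary injection of $f_S^{-1}(u)$ into the copies $\{u_1,\dots,u_{c(u)}\}$ --- this is possible because $|f_S^{-1}(u)|\le c(u)$ --- and for every $v\in f_S^{-1}(u)$ add the edge $vu_i$ with $u_i$ the copy assigned to $v$. For every remaining $v\in V\setminus S$, which must then satisfy $f_S(v) = w\in S\setminus U$, simply add the $V'$-edge $vw$. Verifying that each edge belongs to $E'$ is direct: in the first case $uv\in E$ gives $u_iv\in E'$ by definition of $G'$; in the second case $vw\in E$ combined with $c(w)\ge|f_S^{-1}(w)|\ge 1$ yields $c(v)+c(w)>0$, so the edge $vw$ survives the pruning rule when passing from $G$ to $G'$.

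The core of the argument, and the step that I expect to require the most care, is the matching property. Each $v\in V\setminus S$ is the endpoint of precisely one chosen edge by construction; each copy $u_i$ with $u\in U$ is the endpoint of at most one edge thanks to the injectivity of the assignment above. The delicate case is a vertex $w\in S\setminus U$: since $w$ has only a single representative in $V'$, it can be an endpoint of more than one chosen edge only when $|f_S^{-1}(w)|\ge 2$, which is exactly forbidden by condition~(\ref{capwar}). This is in fact the only place where the extra hypothesis of \capsimplename{} is used; without it, one would be forced to replace ordinary matching by $b$-matching (equivalently, attach several copies to each $w\in S\setminus U$ as well). With this observation the chosen edges form a matching of size $|V|-|S|$, and the construction is manifestly polynomial-time, which completes the proof.
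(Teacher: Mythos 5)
Your proof is correct and follows essentially the same construction as the paper's: one edge per vertex of $V\setminus S$, matching each such $v$ either directly to $f_S(v)\in S\setminus U$ or to a distinct copy of $f_S(v)\in U$, giving $|\phi(S,f_S)|=|V\setminus S|=|V|-|S|$. If anything, you verify the matching property (via condition~\ref{capwar}) and the membership of each edge in $E'$ more explicitly than the paper does, which simply notes that every edge has exactly one endpoint in $V\setminus S$.
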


\begin{proof}
Let us define the matching $\phi(S,f_S)$ as follows:
\begin{itemize}
\item for each $v \notin S$ such that $f_S(v) \notin U$ add $vf_S(v)$ to $\phi(S, f_S)$;
\item for each $v \notin S$ such that $u = f_S(v) \in U$ add $vu_i$ to $\phi(S, f_S)$, where $u_i$ is a copy of $u$ in $G'$ and
different copies $u_i$ are chosen for different vertices $v$ with $f_S(v) = u$ (note that $|f_S^{-1}(u)| \leq c(u)$, so there are enough vertices $u_i$).
\end{itemize}
Note that every vertex $v \in V \setminus S$ is an endvertex of an edge in the matching $\phi(S, f_S)$. The second endvertex is
$f_S(v)$ (in the case $f_S(v) \notin U$) or a copy of $f_S(v)$ in $G'$ (in the case $f_S(v) \in U$). Moreover,
every edge in $\phi(S,f_S)$ has an endpoint in $V \setminus S$. Therefore $|\phi(S, f_S)| + |S| = |V|$.
\end{proof}

\begin{lemma}\label{lem:w2}
Let $M$ be a matching in $G'$. Then one may construct in polynomial time
a feasible solution $\psi(M)$ to the \capsimplename{} problem with dominating function $f_{\psi(M)}$ satisfying
 $|\psi(M)| = |V| - |M|$.
\end{lemma}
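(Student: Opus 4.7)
The plan is to invert the construction of Lemma \ref{lem:w1}. Given a matching $M$ in $G'$, I first classify its edges by the structure of $G'$: each edge $e \in M$ is either of \emph{Type A}, having one endpoint in $V \setminus U$ and the other of the form $u_i$ for some $u \in U$, or of \emph{Type B}, having both endpoints in $V \setminus U$; by construction of $G'$ there is no edge between two $U$-copies, so these cases are exhaustive.

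Next I would build $\psi(M) = S$ and its dominating function $f_S$ as follows. Start with $S := U$, which is required by \capsimplename{}. For every Type A edge $v u_i \in M$, keep $v \notin S$ and set $f_S(v) := u$. For every Type B edge $vw \in M$, use the fact that $vw \in E'$ forces $c(v) + c(w) \geq 1$: pick an endpoint of positive capacity, say $v$ (breaking ties arbitrarily), include it in $S$, and set $f_S(w) := v$. Finally, add every $v \in V \setminus U$ that is unmatched in $M$ to $S$. Since in the underlying graph $G$ the edges $uv$ and $vw$ actually exist (the construction of $G'$ reflects $G$), $f_S(x)$ is always a genuine neighbour of $x$.

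It then remains to verify correctness. For the capacity constraint on $u \in U$, note that $u$ has exactly $c(u)$ copies in $G'$, each incident to at most one matching edge, hence $|f_S^{-1}(u)| \leq c(u)$. For $v \in S \cap (V \setminus U)$ arising from a Type B edge, $v$ has at most one matched edge, so $|f_S^{-1}(v)| \leq 1 \leq c(v)$ (the latter inequality being exactly why we chose the positive-capacity endpoint). All other $v \in S \cap (V \setminus U)$ are unmatched, so $|f_S^{-1}(v)| = 0$. Thus $S$ is a capacitated dominating set containing $U$ and condition (\ref{capwar}) holds. For the count, let $a$ and $b$ be the number of Type A and Type B edges of $M$ respectively; then $|M| = a + b$, the number of matched vertices in $V \setminus U$ is $a + 2b$, and
\[
|S| \;=\; |U| + b + \bigl(|V\setminus U| - a - 2b\bigr) \;=\; |V| - (a+b) \;=\; |V| - |M|.
\]

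The only real obstacle is handling Type B edges: two $V \setminus U$ vertices match each other in $G'$ but the dominating function can go only one way, so we must pick a dominator in $S$. This is exactly the point where the seemingly cosmetic rule $c(v) + c(w) > 0$ in the definition of $E'$ becomes essential — without it a Type B edge could consist of two zero-capacity vertices and no valid choice would exist. Every other step is routine bookkeeping.
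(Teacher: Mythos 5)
Your proof is correct and follows essentially the same construction as the paper: take all of $U$ into the solution, let matched copies $u_i$ determine domination by $u$, pick the positive-capacity endpoint of each edge inside $V \setminus U$ (which is exactly where the condition $c(v)+c(w)>0$ in the definition of $E'$ is needed, as you note), and place all unmatched vertices of $V \setminus U$ into the solution. Your explicit type-A/type-B edge count is just a more detailed version of the paper's observation that each matching edge has exactly one endpoint outside $\psi(M)$, so nothing further is needed.
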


\begin{proof}
Consider the following capacitated dominating set $\psi(M)$ with dominating function $f_{\psi(M)}$:
\begin{itemize}
\item $U \subset \psi(M)$;
\item for $u \in U$ and for each $i$ such that $u_iv \in M$, we take $f_{\psi(M)}(v) = u$;
\item for any edge $vw \in M$, where $v,w \not\in U$ one of the endpoints (say $v$) has to satisfy $c(v) > 0$, we add $v$ to $\psi(M)$ and set $f_{\psi(M)}(w) = v$;
\item for any $v \not\in U$ which is not an endpoint of any edge in $M$ we add $v$ to $\psi(M)$.
\end{itemize}
It is easy to verify that the above procedure does indeed give a feasible solution to \capsimplename{}.
We have $|\psi(M)| = |V| - |M|$ since for each edge in $M$, exactly one of its endpoints does not belong to $\psi(M)$.
\end{proof}

We conclude this section with the following theorem.

\begin{theorem}
The \capsimplename{} problem can be solved in polynomial time.
\end{theorem}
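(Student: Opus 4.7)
The plan is to exploit the two lemmas just proved, which together show that feasible solutions of \capsimplename{} in $G$ and matchings in the auxiliary graph $G'$ are in size-preserving correspondence via $|V|-|\phi(S,f_S)|=|S|$ and $|\psi(M)|=|V|-|M|$. Minimizing $|S|$ therefore amounts to maximizing the size of a matching in $G'$, which is a classical polynomial-time problem (e.g.\ Edmonds' blossom algorithm).

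First I would observe that the graph $G'$ has polynomial size. A priori $|V'| = |V \setminus U| + \sum_{u \in U} c(u)$, which could be huge if capacities are given in binary. However, no vertex of $G$ can dominate more than $n-1$ other vertices, so replacing each $c(u)$ by $\min(c(u), n-1)$ does not change the set of feasible dominating functions. After this normalization $|V'| \le n^2$ and $|E'| \le n^3$, and $G'$ can clearly be constructed in polynomial time from the description in Section~\ref{ssec:poly}.

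Next I would compute a maximum matching $M^*$ in $G'$ in polynomial time and return $\psi(M^*)$. Correctness follows in two lines: by Lemma~\ref{lem:w2} the set $\psi(M^*)$ is a feasible solution of size $|V|-|M^*|$, while by Lemma~\ref{lem:w1} any feasible solution $S$ with dominating function $f_S$ yields a matching $\phi(S,f_S)$ of size $|V|-|S|$, so by maximality $|M^*| \ge |V|-|S|$, i.e.\ $|\psi(M^*)| \le |S|$ for every feasible $S$. Hence $\psi(M^*)$ is optimal.

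There is no real obstacle beyond the bookkeeping above; the only subtlety is the capacity truncation argument that keeps $G'$ polynomial-sized, which is needed because the problem statement allows arbitrary values of $c(\cdot)$. Everything else is an immediate consequence of the two preceding lemmas.
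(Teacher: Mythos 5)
Your proof is correct and takes essentially the same approach as the paper: by Lemmas \ref{lem:w1} and \ref{lem:w2}, minimizing $|S|$ reduces to finding a maximum matching in $G'$, which is polynomial-time. Your extra observation that capacities should be truncated to $\min(c(u), n-1)$ so that $G'$ has polynomially many vertices is a sound piece of bookkeeping that the paper's one-line proof leaves implicit.
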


\begin{proof}
By Lemmas \ref{lem:w1} and \ref{lem:w2},
to find the solution of the \capsimplename{} problem it is enough to find any maximum matching in $G'$, which can be done
in polynomial time (see e.g. \cite{muchasank:matching}).
\end{proof}

\subsection{From \capsimplename{} to \capdomname{}}

Let us start with the following simple observation.
Let $S$ be any capacitated dominating set and let $f_S$ be a dominating function for $S$.
Let
$$U_S = \{v \in S: |f_S^{-1}(v)| \geq 2\}.$$
We have
$$\sum_{v\in S} 1 + |f_S^{-1}(v)| = |S| + \sum_{v \in S} |f_S^{-1}(v)| = n,$$
thus in particular $|U_S| \leq n \slash 3$. Moreover, $S$ with the function $f_S$ is a feasible solution for
the \capsimplename{} instance with the graph $G$ and the set $U_S$. Therefore the following algorithm solves \capdomname{}:
\begin{enumerate}
\item For each $U \subset V$ satisfying $|U| \leq n\slash 3$ solve the \capsimplename{} instance
  with graph $G$ and subset $U$.
\item Return the smallest capacitated dominating set from the constructed \capsimplename{} instances.
\end{enumerate}

The \capsimplename{} problem can be solved in polynomial time and there are
$$\sum_{k=0}^{\lceil n\slash 3 \rceil} \binom{n}{k} = O^*\left( \binom{n}{\lceil n\slash 3 \rceil} \right) = O(1.89^n)$$
possible sets $U$ (i.e. sets of cardinality at most $n\slash 3$),
thus the whole algorithm works in \cdstime{} time.

\subsection{Approximating\capdomname}\label{ssec:capdomapx}

It is known that {\sc Dominating Set} is as hard to approximate as {\sc Set Cover}
and since \capdomname~problem is not easier there probably does not exists a polynomial
time algorithm solving \capdomname  with a constant approximation ratio.
If we do not have enough time to obtain an exact solution for
the \capdomname~problem we can use the following constant
approximation scheme.
Instead of investigating all subsets $U \subset V$ satisfying
$|U| \leq n\slash 3$ we can check only smaller sets,
namely $|U| \leq cn$ for some constant $c \in (0,{\frac 1 3})$.
Thus the approximation algorithm has the following form:

\begin{enumerate}
\item For each $U \subset V$ satisfying $|U| \leq cn$ solve the \capsimplename{} instance
  with graph $G$ and subset $U$.
\item Return the smallest capacitated dominating set from the constructed \capsimplename{} instances.
\end{enumerate}

\begin{theorem}
For any fixed constant $c \in (0,{\frac 1 3})$ the described algorithm runs
in $O^*(\binom{n}{cn}) = O^*((1/(c^c(1-c)^{1-c}))^n)$ time and polynomial space.
For $c \le 1 \slash 4$ the approximation ratio is at most $(\frac{1}{4c}+c)$ and
for $c \ge 1 \slash 4$ the approximation ratio is at most $2-3c$.
\end{theorem}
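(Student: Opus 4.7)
The running time is clear: there are $\sum_{j=0}^{\lfloor cn\rfloor}\binom{n}{j} = O^*\!\bigl(\binom{n}{cn}\bigr) = O^*\!\bigl((c^c(1-c)^{1-c})^{-n}\bigr)$ enumerated subsets $U$, each \capsimplename{} instance is solved in polynomial time by the previous subsection, and enumerating one set at a time uses only polynomial space. The interesting part is the approximation guarantee.

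Let $S^*$ be an optimum \capdomname{} solution of size $k$ with dominating function $f^*$, write $a_v = |f^{*,-1}(v)|$, and let $U^* = \{v \in S^* \colon a_v \ge 2\}$, $u = |U^*|$. As observed at the start of Section~\ref{sec:capdom}, $u \le n/3$. If $u \le cn$ the algorithm tries $U = U^*$ and $S^*$ itself is feasible for the \capsimplename{} instance, so it returns the exact optimum; hence assume $u > cn$. The plan is to let $U$ be the $\lfloor cn\rfloor$ vertices of $U^*$ with the largest values of $a_v$, and to construct a feasible candidate
\[
\tilde S \;=\; S^* \,\cup\, \bigcup_{v \in U^* \setminus U}\bigl(f^{*,-1}(v) \setminus \{x_v\}\bigr),
\]
where $x_v$ is any fixed element of $f^{*,-1}(v)$. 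One checks that $\tilde S \supseteq U$ and that the obvious restriction of $f^*$ is a dominating function for $\tilde S$ sending at most one vertex to each $v \in V \setminus U$ (only $x_v$ survives outside $\tilde S$ for $v \in U^*\setminus U$, while for $v \in S^*\setminus U^*$ the preimage has size $\le 1$ already). Therefore the algorithm's output for this $U$ is at most $|\tilde S| \le k + E$ with $E = \sum_{v \in U^*\setminus U}(a_v - 1)$.

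The first quantitative ingredient is the bound on $E$. From $\sum_{v \in S^*} a_v = n - k$ one gets $T_U := \sum_{v \in U^*} a_v \le n - k$, and since $U$ picks the top $\lfloor cn\rfloor$ values of $a_v$ inside $U^*$, a one-line averaging argument gives $\sum_{v \in U^* \setminus U} a_v \le \tfrac{u - cn}{u} T_U$, hence
\[
E \;\le\; \frac{(u-cn)(n-k-u)}{u}.
\]
Writing $\sigma = u/n$, $\kappa = k/n$, this yields the ratio bound $|\tilde S|/k \le 1 + (\sigma - c)(1 - \kappa - \sigma)/(\sigma\kappa)$.

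The remaining, and most delicate, step that I expect to be the main obstacle is the maximisation of this ratio over the feasible region $c < \sigma \le 1/3$, $\sigma \le \kappa \le 1 - 2\sigma$ (the lower bound $\kappa \ge \sigma$ is from $U^* \subseteq S^*$; the upper bound from $T_U \ge 2u$ together with $T_U \le n-k$). The right-hand side is monotonically decreasing in $\kappa$, so the worst case has $\kappa = \sigma$, and the problem reduces to maximising $(\sigma - c)(1 - 2\sigma)/\sigma^2$ on $(c, 1/3]$. Setting the derivative to zero gives the interior critical point $\sigma^{*} = 2c/(1+2c)$, which lies in the interval exactly when $c \le 1/4$; direct substitution gives the value $(1-2c)^2/(4c)$ and hence the ratio $1 + (1-2c)^2/(4c) = 1/(4c) + c$. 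For $c \ge 1/4$ the critical point $\sigma^{*}$ escapes above $1/3$, so the supremum is attained at the boundary $\sigma = 1/3$, evaluating to $1 - 3c$ and giving the ratio $2 - 3c$.
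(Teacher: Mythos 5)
Your proof is correct and follows essentially the same route as the paper: the same candidate feasible solution (the optimum plus all-but-one dominated vertex for each high-multiplicity optimum vertex outside the chosen $U$), the same top-$cn$ averaging argument, the same worst-case reduction $\kappa=\sigma$ (the paper's ``w.l.o.g.\ $m_0=m_1=0$''), and the same optimization --- your function $(\sigma-c)(1-2\sigma)/\sigma^2$ is exactly the paper's $(1-cx)(x-2)$ under the substitution $x=1/\sigma$, with your critical point $\sigma^{*}=2c/(1+2c)$ the reciprocal of the paper's $x_0=\frac{1}{2c}+1$, and the same boundary case at $\sigma=1/3$ (i.e.\ $x=3$) for $c\ge 1/4$. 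The only blemish, treating $\lfloor cn\rfloor$ as $cn$ in the averaging step, is shared verbatim with the paper and costs only an $O(1/n)$ additive term in the ratio, so nothing is lost relative to the paper's own standard of rigor.
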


\begin{proof}
For each subset $U \subset V$ the algorithm uses polynomial time only, thus the time bound follows directly from the Stirling formula which can be used to bound the number of subsets $\binom{n}{cn}$.

To calculate the approximation ratio let us consider some optimal solution
$OPT \subset V$ together with a function $f_{OPT}: V \setminus OPT \rightarrow OPT$.
By $OPT_0$, $OPT_1$ and $OPT_2$ let us denote subsets of $OPT$ containing
vertices which dominate exactly zero, exactly one and at least two vertices from
$V \setminus OPT$, according to $f_{OPT}$, respectively. By $m, m_0, m_1, m_2$ we denote
the cardinalities of sets $OPT$, $OPT_0$, $OPT_1$ and $OPT_2$ respectively.

We may assume that $m_2 > cn$ since otherwise our algorithm finds the optimal solution.
By $s$ let us denote the average number of vertices from $V \setminus OPT$ which a vertex from $OPT_2$ dominates, i.e.
$s = (n-m-m_1) \slash m_2$.
Since our algorithm checks all subsets $U \subset V$ satisfying $|U| \le cn$
it obviously considers the subset $U_0 \subset OPT_2$ containing vertices which dominate the largest number of vertices from $V \setminus OPT$.
For this particular subset $U_0$ note that there exists a feasible solution to \capsimplename{} of size $m + (m_2 - cn)(s-1)$:
we take $OPT$ and for each vertex in $v \in OPT_2 \setminus U_0$ we take all but one vertices from $f_{OPT}^{-1}(v)$.

Thus the approximation ratio can be bounded by $\alpha = 1+(m_2-cn)(s-1)\slash m$.
Note that if we keep $m_2$ fixed and increase $m_0$ and $m_1$, the approximation ratio decreases --- we increase $m$ and decrease $s$ ---
therefore w.l.o.g. we may assume $m_0 = m_1 = 0$. Denoting $x = n \slash m_2$, we obtain
$\alpha \leq 1 + (1-cx)(x-2)$ for $3 \leq x \leq \frac{1}{c}$. The bound for $\alpha$ is a concave function of $x$ with maximum at
$x_0 = \frac{1}{2c} + 1$. This gives $\alpha \leq \frac{1}{4c} + c$ for $c \leq \frac{1}{4}$ and $\alpha \leq 2-3c$ for $\frac{1}{4} \leq c \leq \frac{1}{3}$.
\end{proof}

\begin{corollary}
For $c=1\slash 6$ we obtain a $5\slash 3$-approximation algorithm which runs
in $O(1.57^n)$ time and polynomial space.
\end{corollary}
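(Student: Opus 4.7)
The plan is to invoke the preceding theorem directly with the choice $c = 1/6$ and then perform two elementary numerical verifications, one for the approximation ratio and one for the exponential base in the running time.

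First I would check that $1/6$ lies in the range $(0, 1/3)$ required by the theorem, and in particular that $1/6 \leq 1/4$, so the relevant bound on the approximation ratio is $\frac{1}{4c} + c$. Substituting $c = 1/6$ gives
\[
\frac{1}{4 \cdot 1/6} + \frac{1}{6} = \frac{3}{2} + \frac{1}{6} = \frac{10}{6} = \frac{5}{3},
\]
which yields the claimed $5/3$-approximation guarantee. Polynomial space is inherited verbatim from the theorem, since the algorithm enumerates the sets $U$ one by one and solves each \capsimplename{} instance in polynomial space.

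For the running time, the theorem gives a bound of $O^*\bigl((1/(c^c(1-c)^{1-c}))^n\bigr)$. With $c = 1/6$ this exponential base equals $6^{1/6} \cdot (6/5)^{5/6}$, and I would simply verify numerically that this quantity is strictly less than $1.57$; then the polynomial factor hidden in $O^*$ is absorbed by a tiny slack, giving the stated $O(1.57^n)$ bound.

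The only potential obstacle is the numerical check of $6^{1/6}(6/5)^{5/6} < 1.57$, but this is routine: taking logarithms, one needs $\tfrac{1}{6}\ln 6 + \tfrac{5}{6}\ln(6/5) < \ln 1.57$, which is easily verified to several decimal digits. No conceptual difficulty arises, so the corollary follows immediately.
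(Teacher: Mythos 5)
Your proposal is correct and is exactly the intended derivation: the paper states this corollary without proof precisely because it follows by substituting $c=1/6$ into the preceding theorem, checking $1/6 \leq 1/4$ so that the ratio bound $\frac{1}{4c}+c = \frac{3}{2}+\frac{1}{6} = \frac{5}{3}$ applies, and verifying numerically that $6^{1/6}(6/5)^{5/6} \approx 1.5692 < 1.57$, which leaves enough exponential slack to absorb the polynomial factor hidden in the $O^*$ notation.
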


This result should be compared to the trivial approximation scheme that works in $O^*(\binom{n}{cn})$ time too: iterate over all subsets of $V$ that have size at most $cn$ or at least $(1-c)n$ and return the smallest feasible solution found. However, this algorithm has an approximation factor of $\frac{1}{c}-1$, which is between $2\times$ and $4\times$ worse than our ratio.

\section{\maxirsetname}\label{sec:big}
In this section we show a branching algorithm to find the largest (in the cardinality sense) irredundant set. Notice that the largest irredundant set is necessarily inclusion--maximal.

\subsection{Bipartite independent set reduction}
Let us consider a bipartite graph $H = (W,F)$, with $W = V \cup V'$ (where $V'$ is a disjoint copy of $V$) and edges $\{u,v'\} \in F$ iff $\{u,v\} \in E$ or $u = v$. We call a set of edges $M \subset F$ in this graph {\em independent} if:
\begin{itemize}
\item no two edges in $M$ share an endpoint
\item the set $W(M)$ is an independent set in the graph $(W, F\setminus M)$, i.e. no edge connects endpoints of different edges from $M$.
\end{itemize}
Now we show a correspondence between largest irredundant sets in $G$ and largest independent sets of edges in $H$:
\begin{lemma}
If $M$ is an independent set of edges in $H$, then $S := W(M) \cap V$ is an irredundant set in $G$. Conversely, if
$S$ is an irredundant set in $G$, then there exists an independent set of edges $M \subset F$ such that $S = W(M) \cap V$.
In both cases, the independent set of edges $M$ and the irredundant set $S$ are of the same cardinality.
\end{lemma}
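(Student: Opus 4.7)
The plan is to set up the bijection explicitly: from an irredundant set $S$, form $M$ by picking, for each $v\in S$, a witness $u_v$ as in the definition of irredundance, and encoding it as the edge $\{v,u_v'\}$; conversely, from an independent edge set $M$, read off $S := W(M)\cap V$ together with the witnesses given by the $V'$-endpoints. The membership $\{v,u_v'\}\in F$ is immediate in either direction, since the edges of $H$ are tailored exactly to the two cases $u_v=v$ (when $v$ has no $S$-neighbour in $G$) and $u_v\neq v$ with $u_vv\in E$.

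For the direction $S\Rightarrow M$, I would fix for each $v\in S$ a unique vertex $u_v$ guaranteed by irredundance and set $M=\{\{v,u_v'\}:v\in S\}$. Two checks are needed. First, no two edges share an endpoint: the $V$-side endpoints are exactly the elements of $S$, hence distinct, while equality $u_{v_1}=u_{v_2}$ with $v_1\neq v_2$ would make $v_2\in S\setminus\{v_1\}$ dominate $u_{v_1}$, contradicting that $u_{v_1}$ is unique for $v_1$. Second, $W(M)$ is independent in $(W,F\setminus M)$: any $F$-edge between endpoints of distinct edges of $M$ would have the shape $\{v_1,u_{v_2}'\}$ with $v_1\neq v_2$ in $S$, forcing $v_1$ to dominate $u_{v_2}$ in $G$, again contradicting the uniqueness of $u_{v_2}$ for $v_2$. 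This also gives $|M|=|S|$ since the $v$'s in $S$ are distinct.

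For the direction $M\Rightarrow S$, set $S:=W(M)\cap V$. Condition (i) (no shared endpoints) gives $|S|=|M|$ and, for each $v\in S$, a well-defined $u_v$ such that $\{v,u_v'\}$ is the unique edge of $M$ incident to $v$. Since $\{v,u_v'\}\in F$, we get $v$ dominates $u_v$ in $G$. Suppose some $w\in S\setminus\{v\}$ also dominated $u_v$; then $\{w,u_v'\}\in F$. This edge is not in $M$, because $w$'s unique $M$-edge has $V'$-endpoint $u_w'\neq u_v'$ (condition (i) applied on the $V'$-side). Thus $\{w,u_v'\}$ lies in $F\setminus M$ and joins the $V$-endpoint of the $M$-edge through $w$ to the $V'$-endpoint of the $M$-edge through $v$, violating condition (ii). Hence $u_v$ is a unique vertex for $v$ in $S$, proving $S$ is irredundant.

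The main obstacle, if any, is purely bookkeeping: handling uniformly the case $u_v=v$ (a "loop-like" edge $\{v,v'\}$ in $H$) together with the case $u_v\neq v$. The definition of $F$ is designed precisely to conflate these two situations into a single edge relation, so after noting this once the rest of the argument is a literal translation between the "$u_v$ is unique for $v$" clause of irredundance and the two clauses defining independence of an edge set in $H$.
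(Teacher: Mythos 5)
Your proof is correct and takes essentially the same approach as the paper's: in one direction read off a unique (witness) vertex for $v$ from the $V'$-endpoint of its $M$-edge, using independence to rule out any other dominator in $S$; in the other, encode each $v\in S$ with a chosen witness as the edge $\{v,u_v'\}$. Your write-up is in fact more detailed than the paper's two-sentence argument, which omits the explicit verification that the constructed edge set is independent, but the underlying correspondence is identical.
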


\begin{proof}
Let $T$ be an independent set of edges in $H$. If $\{u,v'\} \in M$, then $v'$ is not a neighbour of any $w \in M\cap V$ other than $u$, thus $v$ is a unique vertex dominated by $u$ in $G$.
Conversely, if we have an irredundant set $S$ in $G$ then letting $u(v)$ to be any unique vertex dominated by $v$ for $v \in S$ we obtain an independent set of edges $M := \{\{v,u'(v)\}: v\in S\}$ in $H$.
\end{proof}
Thus, to find the largest irredundant set in $G$ it suffices to look for the largest independent set of edges in $H$.

\begin{figure}[htbp]
\vspace{-15pt}
\begin{center}
{\small{
\begin{tikzpicture}[scale=1.0]

  \fill (0.0, 1.0) circle (0.1);
  \fill (1.0, 1.0) circle (0.1);
  \fill (2.0, 1.0) circle (0.1);
  \fill (3.0, 1.0) circle (0.1);

  \draw[thick] (0.0, 1.0) -- (3.0, 1.0);
  \draw (0.0, 1.0) circle (0.2);
  \draw (2.0, 1.0) circle (0.2);
  \draw (0.0, 1.4) node {$v_1$};
  \draw (1.0, 1.4) node {$v_2$};
  \draw (2.0, 1.4) node {$v_3$};
  \draw (3.0, 1.4) node {$v_4$};

  \draw[very thick, dashed, ->] (5.0, 1.0) -- (7.0, 1.0);

\begin{scope}[shift={(9.0, 0.0)}]
  \begin{scope}[shift={(0.0, 0.5)}]
  \fill (0.0, 1.0) circle (0.1);
  \fill (1.0, 1.0) circle (0.1);
  \fill (2.0, 1.0) circle (0.1);
  \fill (3.0, 1.0) circle (0.1);
  \draw (0.0, 1.0) circle (0.2);
  \draw (2.0, 1.0) circle (0.2);
  \draw (0.0, 1.4) node {$v_1$};
  \draw (1.0, 1.4) node {$v_2$};
  \draw (2.0, 1.4) node {$v_3$};
  \draw (3.0, 1.4) node {$v_4$};
  \end{scope}
  \begin{scope}[shift={(0.0, -0.5)}]
  \fill (0.0, 1.0) circle (0.1);
  \fill (1.0, 1.0) circle (0.1);
  \fill (2.0, 1.0) circle (0.1);
  \fill (3.0, 1.0) circle (0.1);
  \draw (0.0, 0.6) node {$v_1'$};
  \draw (1.0, 0.6) node {$v_2'$};
  \draw (2.0, 0.6) node {$v_3'$};
  \draw (3.0, 0.6) node {$v_4'$};
  \end{scope}
  \foreach \x in {0, 1, 2, 3}
  {
    \draw[thick] (\x, 1.5) -- (\x, 0.5);
  }
  \foreach \x in {0, 1, 2}
  {
    \draw[thick] (\x, 1.5) -- (\x+1.0, 0.5);
    \draw[thick] (\x+1.0, 1.5) -- (\x, 0.5);
  }
  \draw[line width=1mm] (0.0, 1.5) -- (0.0, 0.5);
  \draw[line width=1mm] (2.0, 1.5) -- (3.0, 0.5);
\end{scope}

\end{tikzpicture}
}}
\vspace{-10pt}
\caption{From the irredundant set $\{v_1, v_3\}$ to the independent set of edges $\{v_1v_1', v_3v_4'\}$.}
\label{fig:bigirsetred}
\end{center}
\end{figure}
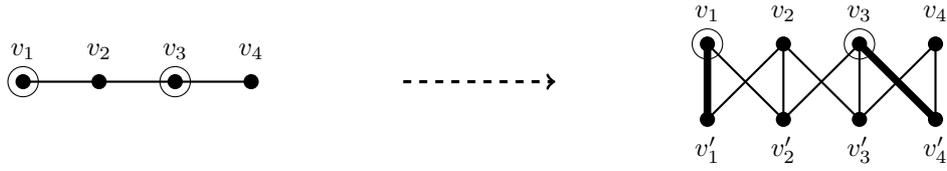

A remark to make here is that the correspondence between independent edge--sets of $H$ and irredundant sets of $G$ does not preserve inclusion--maximality --- e.g. in the graph in Figure \ref{fig:bigirsetred} there are four inclusion--maximal irredundant sets ($\{v_1,v_3\}, \{v_1,v_4\}, \{v_2,v_3\}$ and $\{v_2,v_4\}$), while in the corresponding bipartite graph the edge $v_2v_3'$ forms a singleton inclusion--maximal independent edge--set. This does no harm in the search for $IR(G)$ (as the cardinality of the set is preserved), but prevents us from using this approach to tackle the problem of $ir(G)$.

\subsection{Branch and reduce analysis}

Let $\alpha = 1.40202$. We provide an algorithm for finding the largest independent set of edges in a bipartite graph $H$ on $n$ vertices with complexity $O(\alpha^n)$. As we reduced the problem of finding the largest irredundant set in a graph with $n$ vertices to the problem of finding the largest independent edge set in a bipartite graph with $2n$ vertices, our algorithm allows us to find the largest irredundant set in a graph on $n$ vertices in $O(\alpha^{2n}) = O(1.9657^n)$ time.

The algorithm is a typical branch-and-reduce algorithm. We provide a set of reduction rules; in a single step the algorithm finds the first applicable rule and applies it.
By $T(n)$ we denote the bound on the number of steps that the algorithm requires for a graph with $n$ vertices. For each rule, we provide
a recursion inequality for $T(n)$. Since the function $T(n) = C \cdot \alpha^n$ satisfies all inequalities for any $C>0$, and every step takes polynomial time, the algorithm works in desired time.
We do simple reductions without any memoization, so the algorithm uses polynomial space. If we can spare more memory, we can have a standard trade--off between memory and computation time: we may for instance precompute answers for subgraphs of our bipartite graph having up to $n \slash 11$ vertices in $O^*(\binom{n}{n\slash 11}) = O(1.36^n)$ time and space, and then stop branching when we reach the level $n\slash 11$, this reduces the time complexity to $O(\max(1.36^n,\alpha^{10n\slash11}))=O(1.36^n)$ for finding the independent set of edges, and thus to $O(1.36^{2n})=O(1.85^n)$ for finding the largest irredundant set.

Let $H=(W,F)$ be a bipartite graph on $n$ vertices.
We use the word {\em choose} (an edge) to mean ``consider as a part of the independent edge set being built in the considered subcase'' and the word {\em drop} (a vertex) to mean ``consider that this vertex is not the endpoint of any edge in the independent edge set being built in the considered subset, so we can consider the graph without this vertex in this subcase''.
\begin{enumerate}
\item There is an isolated vertex $v \in W$. Obviously it cannot be an endpoint of an edge, so we drop it and solve the problem for $(W\setminus\{v\}, F)$.
The recursion inequality for $T(n)$ is $T(n) \geq T(n-1)$, which holds trivially.
\item There is a vertex $v$ of degree 1 in $H$, and its only neighbour is also of degree 1. These two vertices form an edge which is a part of any largest independent edge set.
Thus we choose it and solve the remaining graph on $n-2$ vertices, obtaining the trivial inequality $T(n) \geq T(n-2)$.
\item There is a vertex $v$ of degree $1$ in $H$. Let $u$ be its sole neighbour, $\deg u \geq 2$. Consider any independent edge set $M$ which contains some edge $ux$ for $x \neq v$. Then
$(M \setminus \{ux\}) \cup \{uv\}$ is also an independent set of the same cardinality. Thus we can branch out into two cases --- either we drop $u$ (and consequently drop $v$, as it becomes isolated), or we choose $uv$ and drop all the neighbours of $v$ (at least one of them). Here we obtain inequality $T(n) \geq T(n-2) + T(n-3)$.
\item There is a vertex $v$ of degree $k := \deg v \geq 8$.
We branch out, and either drop this vertex and a obtain graph with $n-1$ vertices, or choose one of the edges incident with $v$.
If we choose an edge, we remove all the neighbours of $v$ ($k$ of them) and all the neighbours of the other endpoint
(at least $2$ of them --- we already excluded vertices of degrees $0$ and $1$, and they do not coincide with the neighbours of $v$ due to bipartity) from the graph.
Thus we the obtain inequality
$$T(n) \geq T(n-1) + k\cdot T(n-k-2) \quad {\rm for} \quad k \geq 8.$$
Now we check if $T(n) = C\alpha^n$ satisfies this inequality for all $k \geq 8$. However, note that $\alpha > \frac{k+1}{k}$ for $k \geq 8$, so we only need to check the case $k=8$, which holds
by straightforward calculations.
\item There are two adjacent vertices $u$ and $v$ of degree 2 each. Let $u_1$ be the other neighbour of $u$ and $v_1$ be the other neighbour of $v$ ($u_1 \neq v_1$ due to bipartity).
We claim that there exists a largest independent edge set which contains one of the edges $uu_1$, $uv$ or $vv_1$. Indeed, consider any independent edge set $M$. Then:
\begin{itemize}
\item if neither $u_1$ nor $v_1$ is an endpoint of an edge in $M$, we may add $uv$ to $M$, preserving independence and increasing cardinality;
\item if, say, $u_1$ is the endpoint of some edge $u_1x$, but none of the three aforementioned edges belong to $M$, we may remove $u_1x$ from $M$ and add $u_1u$ instead, thus preserving cardinality and keeping independence.
\end{itemize}
Thus we consider three cases (each case consisting of choosing one of the three edges). In each case we remove the neighbours of two adjacent vertices, each of degree at least two, thus we remove at least 4 vertices in each case. We get the inequality $T(n) \geq 3T(n-4)$.
\item \label{case:deg2} There is a vertex $v$ of degree $2$. Let $u$ and $w$ be its two neighbours. Let $d_u$ and $d_w$ be degrees of $u$ and $w$ respectively and let $k$ be the number
of common neighbours of $u$ and $w$, excluding $v$. We have $3 \leq d_u, d_w < 8$ and $0 \leq k < d_u, d_w$.
We branch into four cases. In the first, the edge $uv$ is chosen, we drop the neighbours of $u$ and $v$ and obtain graph on $n-d_u-2$ vertices. In the second, we choose the edge $vw$, and similarly obtain $n-d_w-2$ vertices. In the third, we drop the vertices $u, v$ and $w$ and solve the problem for the graph on $n-3$ vertices.
In the fourth we drop $v$ and choose one of the edges coincident with $w$ and one of the edges coincident with $u$.

The case we have not considered is when there is an independent edge set $M$, where exactly one of the points $w, u$, say $w$, is coincident with an edge $wx$ from $M$. In this case, however, we may remove $wx$ from $M$, adding $wv$ instead, preserving both cardinality and independence. Thus this case need not be considered.

Now let us analyze the fourth branching option. If we choose edges $uu'$ and $ww'$ to the independent set of edges, neither $u'$ nor $w'$ is a common neighbour of $u$ and $w$.
Therefore we can choose $u'$ and $w'$ in $(d_u-k-1)(d_w-k-1)$ ways.
Let us check how many vertices are removed in each choice. From one side of the bipartite graph we remove at least $v$ and the $d_u-1+d_w-1-k$ vertices adjacent to either $u$ or $w$.
From the other we remove $u$, $w$ and at least one neighbour of $u'$ ($w'$ and $u'$ may both have degree $2$ and common neighbours, thus we cannot assume removing additionally a neighbour of $w'$).
Thus we remove at least $d_u+d_w+2-k$ vertices in total and obtain the inequality
\begin{align*}
T(n) &\geq T(n-d_u-2) + T(n-d_w-2) + T(n-3) \\
     &\quad + (d_u-k-1)(d_w-k-1)T(n-d_u-d_w-2+k).
\end{align*}
A direct check of all the possible values of $d_u$, $d_w$ and $k$ proves the inequality for $T(n) = C\alpha^n$.
\item \label{case:3i3} There are two adjacent vertices $u$ and $v$, both of degree $3$. This case is the most complicated and is considered separately in Section \ref{ssec:osobny}.
\item What remains is a case where there is a vertex $v$ of degree $3 \leq i < 8$ and all its neighbours have degree at least $4$. We branch, either
dropping $v$ or choosing one of the coincident edges. If we choose edge $vv'$, we remove at least $i+4$ vertices, thus obtaining the inequality $T(n) \geq T(n-1) + iT(n-i-4)$.
By a direct check the inequality holds for $T(n) = C\alpha^n$ for all valid values of $i$.
\end{enumerate}

\begin{figure}[htbp]
\vspace{-10pt}
\begin{center}
{\small{
\begin{tikzpicture}[scale=1.0]

\begin{scope}[shift={(0.0, 0.0)}]
  \fill (0.0, 0.0) circle (0.1);
  \fill (-1.0, 1.0) circle (0.1);
  \fill (1.0, 1.0) circle (0.1);
  \draw (-1.0, 1.0) -- (0.0, 0.0) -- (1.0, 1.0);
  \foreach \x in {-2.25, -1.75, -1.25, -0.75, -0.25, 0.25}
  {
    \draw (-1.0, 1.0) -- (\x, 2.0);
  }
  \foreach \x in {-2.25, -1.75, -1.25, -0.75, -0.25, 0.25}
  {
    \draw (1.0, 1.0) -- (-\x, 2.0);
  }
  \draw[snake=brace] (-2.4, 2.1) -- (-0.4, 2.1);
  \draw[snake=brace] (-0.4, 2.1) -- (0.4, 2.1);
  \draw[snake=brace] (0.4, 2.1) -- (2.4, 2.1);

  \draw (-1.4, 2.4) node {$(d_u-k-1)$};
  \draw (1.4, 2.4) node {$(d_w-k-1)$};
  \draw (0.0, 2.4) node {$k$};

  \draw (0.0, -0.3) node {$v$};
  \draw (-1.15, 0.75) node {$u$};
  \draw (1.15, 0.75) node {$w$};
\end{scope}

\begin{scope}[shift={(5.5, 0.0)}]
  \fill (0.0, 0.0) circle (0.1);
  \fill (-1.0, 1.0) circle (0.1);
  \fill (1.0, 1.0) circle (0.1);
  \draw (-1.0, 1.0) -- (0.0, 0.0) -- (1.0, 1.0);
  \foreach \x in {-2.25, -1.75, -1.25, -0.75, -0.25, 0.25}
  {
    \draw (-1.0, 1.0) -- (\x, 2.0);
  }
  \foreach \x in {-2.25, -1.75, -1.25, -0.75, -0.25, 0.25}
  {
    \draw (1.0, 1.0) -- (-\x, 2.0);
  }
  \draw[snake=brace] (-2.4, 2.1) -- (-0.4, 2.1);
  \draw[snake=brace] (-0.4, 2.1) -- (0.4, 2.1);
  \draw[snake=brace] (0.4, 2.1) -- (2.4, 2.1);
  \draw (-1.4, 2.4) node {$(i_1-k_u-1)$};
  \draw (1.4, 2.4) node {$(i_2-k_u-1)$};
  \draw (0.0, 2.4) node {$k_u$};

  \draw (0.0, -0.3) node {$u$};
  \draw (-1.15, 0.75) node {$u_1$};
  \draw (1.15, 0.75) node {$u_2$};
\end{scope}
\begin{scope}[shift={(10.5, -1.05)}]
  \fill (0.0, 0.0) circle (0.1);
  \fill (-1.0, 1.0) circle (0.1);
  \fill (1.0, 1.0) circle (0.1);
  \draw (-1.0, 1.0) -- (0.0, 0.0) -- (1.0, 1.0);
  \foreach \x in {-2.25, -1.75, -1.25, -0.75, -0.25, 0.25}
  {
    \draw (-1.0, 1.0) -- (\x, 2.0);
  }
  \foreach \x in {-2.25, -1.75, -1.25, -0.75, -0.25, 0.25}
  {
    \draw (1.0, 1.0) -- (-\x, 2.0);
  }
  \draw[snake=brace] (-2.4, 2.1) -- (-0.4, 2.1);
  \draw[snake=brace] (-0.4, 2.1) -- (0.4, 2.1);
  \draw[snake=brace] (0.4, 2.1) -- (2.4, 2.1);

  \draw (-1.4, 2.4) node {$(j_1-k_v-1)$};
  \draw (1.4, 2.4) node {$(j_2-k_v-1)$};
  \draw (0.0, 2.4) node {$k_v$};

  \draw (0.0, -0.3) node {$v$};
  \draw (-1.15, 0.75) node {$v_1$};
  \draw (1.15, 0.75) node {$v_2$};
\end{scope}
  \draw (5.5, 0.0) -- (10.5, -1.05);
\end{tikzpicture}
}}
\vspace{-10pt}
\caption{Notation in Cases \ref{case:deg2} and \ref{case:3i3} (i.e. Lemma \ref{lem:osobne}).}
\label{fig:bigirsetcases}
\end{center}
\end{figure}
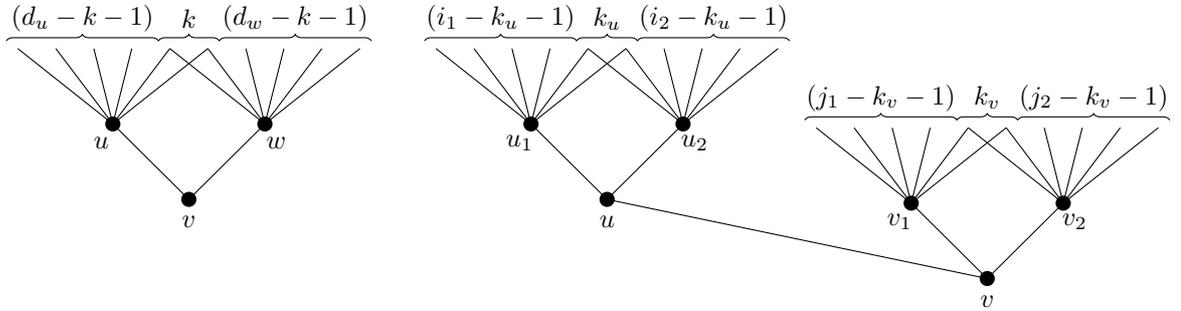

\subsection{Two adjacent vertices of degree 3}\label{ssec:osobny}
Recall the single case we have left to consider is when we have two adjacent vertices of degree $3$, and all other vertices have degrees between $3$ and $7$. Let the two adjacent vertices be $u$ and $v$, let $u_1$ and $u_2$ be the two other vertices adjacent to $u$ and let $v_1$ and $v_2$ be the two other vertices adjacent to $v$. Again, we want to take advantage of the fact that searching for the largest independent edge set we do not have to consider some of the combinations.

\begin{lemma}\label{lem:osobne}
There exists a largest independent edge set $M$ in $H$ for which either
\begin{enumerate}
\item one of the edges $uv, uu_1, uu_2, vv_1, vv_2$ appears in $M$, or
\item both $u_1$ and $u_2$ are endpoints of edges in $M$, or
\item both $v_1$ and $v_2$ are endpoints of edges in $M$ and $u_1$ and $u_2$ are not endpoints of edges in $M$.
\end{enumerate}
\end{lemma}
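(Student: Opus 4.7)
The plan is to start from an arbitrary largest independent edge set $M_0$ in $H$ and show that either $M_0$ already satisfies one of the three conditions, or else $M_0$ can be modified into another largest independent edge set that satisfies condition~(1). The case split I would use is driven by which of the six vertices $u, v, u_1, u_2, v_1, v_2$ lie in $V(M_0)$.

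First, if $u \in V(M_0)$ or $v \in V(M_0)$, then since $\deg u = \deg v = 3$ the incident edge of $M_0$ must be one of the five edges listed in condition~(1), and we are done with $M = M_0$. Otherwise $u, v \notin V(M_0)$, and I would examine $u_1, u_2$ next: if both lie in $V(M_0)$ this is exactly condition~(2), and if neither does I would examine $v_1, v_2$, where ``both in $V(M_0)$'' yields condition~(3) while ``neither in $V(M_0)$'' allows us to add the edge $uv$ to $M_0$, contradicting maximality. The remaining situations --- exactly one of $\{u_1, u_2\}$ is in $V(M_0)$, or both are absent but exactly one of $\{v_1, v_2\}$ is in $V(M_0)$ --- are handled by a swap. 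For instance, if $u_1 \in V(M_0)$ via the unique edge $u_1 x \in M_0$ incident to $u_1$ while $u_2 \notin V(M_0)$, I would set $M := (M_0 \setminus \{u_1 x\}) \cup \{u u_1\}$, which preserves cardinality and contains $u u_1$, giving condition~(1); the other three sub-cases are symmetric.

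The main obstacle is verifying that the swap yields a valid independent edge set, i.e.\ that $V(M)$ is independent in $(W, F \setminus M)$. The only new vertex in $V(M)$ compared to $V(M_0)$ is $u$, whose neighbors in $H$ are exactly $v, u_1, u_2$; by the case assumption $v$ and $u_2$ lie outside $V(M)$, and $u_1$ is matched with $u$ via $u u_1 \in M$, so no edge of $F \setminus M$ can connect $u$ to another vertex of $V(M)$. This observation also clarifies why conditions~(2) and~(3) are genuinely required rather than subsumed by~(1): when both $u_1, u_2 \in V(M_0)$ but $u, v \notin V(M_0)$, the analogous swap would leave $u u_2 \in F \setminus M$ joining endpoints of two distinct edges in $M$, so condition~(1) cannot in general be forced in that configuration.
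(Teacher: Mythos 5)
Your proof is correct and takes essentially the same approach as the paper: the same exchange argument (when exactly one vertex of the pair $\{u_1,u_2\}$, or of $\{v_1,v_2\}$ with $u_1,u_2$ unmatched, is an endpoint, swap its matching edge for $uu_1$ resp.\ $vv_1$) and the same maximality contradiction (add $uv$ when none of the four is matched). Your version merely spells out details the paper leaves implicit, such as the degree-3 observation forcing condition (1) when $u$ or $v$ is matched, and the explicit independence check for the swapped set.
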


\begin{proof}
Take any largest independent edge set $M'$. If $M'$ satisfies the conditions above, there is nothing to prove.

If exactly one of $u_1,u_2$ is an endpoint of an edge in $M'$ (say $u_1$ is an endpoint of $u_1u_1'$), we may remove $u_1u_1'$ from $M'$ and add $uu_1$ instead. We do not lose independence (as neither $v$ nor $u_2$ was an endpoint of an edge), thus we received a new independent set of the same cardinality and satisfying our conditions. We proceed similarly if $u_1$ and $u_2$ are not endpoints of any edge in $M'$ and only one of $v_1, v_2$ is an endpoint of an edge in $M'$.

If none of $u_1, u_2, v_1, v_2$ are endpoints of edges in $M'$ we may add $uv$ to $M'$, thus contradicting the assumption $M'$ was a largest independent set.
\end{proof}

Thus we branch using the options from Lemma \ref{lem:osobne}.
There are five branches in which we choose one of the five named edges. Each edge connects two vertices of degree at least three, so in each case we obtain graph with at most $n-6$ vertices.

Let $i_1$ and $i_2$ be the degrees of $u_1$ and $u_2$ respectively, and let $j_1$ and $j_2$ be the degrees of $v_1$ and $v_2$ in $G[V \setminus \{u_1,u_2\}]$. Assume that $u_1$ and $u_2$ share $k_u$
neighbours, excluding $u$, and $v_1$ and $v_2$ share $k_v$ neighbours, excluding $v, u_1$ and $u_2$.
The next branch is when $u_1$ and $u_2$ are taken. As we take none of the named edges, we drop the vertex $u$ and then take one of the remaining edges leaving $u_1$ and one edge leaving $u_2$. We have $i_1-1-k_u$ ways to choose an edge leaving $u_1$ and $i_2-1-k_u$ ways to choose one leaving $u_2$. Let us now count the number of vertices removed in such branch. We remove $u_1$, $u_2$, $u$,
$(i_1-k_u-1) + (i_2-k_u-1) + k_u$ neighbours of $u_1$ and $u_2$, excluding $u$, and at least $2$ neighbours of the other endpoints of the chosen edges. In total, we remove $i_1+i_2-k_u+3$ vertices.
Note that we do not count $v$ to be removed, as it might be one of these $2$ neighbours of the other endpoints (one of the edges taken might be, e.g., $u_1v_1$).

The final branch is when we drop $u_1$, $u_2$, $u$ and $v$ and choose one edge leaving $v_1$ and one edge leaving $v_2$. We have $(j_1-1-k_v)(j_2-1-k_v)$ ways to choose these edges.
From one side of the bipartite graph, we remove $u_1$, $u_2$, $v$ and $(j_1-1-k_v) + (j_2-1-k_v) + k_v$ neighbours of $v_1$ and $v_2$, excluding $v, u_1$ and $u_2$. From the other side, we
remove $u$ and at least $2$ neighbours of the other endpoints of the chosen edges. In total, we remove $j_1+j_2-k_v+6$ vertices.

Therefore, in this branching rule, we obtain the following inequality for $T(n)$:
\begin{align*}
T(n) &\geq 5T(n-6) + (i_1-k_u-1)(i_2-k_u-1)T(n-i_1-i_2+k_u-3) \\
     & \quad + (j_1-k_v-1)(j_2-k_v-1)T(n-j_1-j_2+k_v-6).
\end{align*}
A direct check for all possible values of $i_1$, $i_2$, $j_1$, $j_2$, $k_u$ and $k_v$ finishes the proof that $T(n) = O(\alpha^n)$.

\section{\minirsetname}\label{sec:small}

In this section we focus on solving \minirsetname{}. We show a simple
iterative-DFS algorithm that requires polynomial space and prove that
it works in \mintime{} time. W.l.o.g. we may assume that $G$ contains
no isolated vertices, since they need to be included in any inclusion-maximal
irredundant set.

Let $\mathcal{F}_k$ be a family of irredundant sets in $G$ of size not greater than
$k$. Note that checking if a set is an (inclusion-maximal) irredundant set can
be done in polynomial time. Moreover, the family of irredundant sets is closed
under taking subsets. Therefore
$\mathcal{F}_k$ can be enumerated in $O^*(|\mathcal{F}_k|)$ time
and polynomial space by simple depth-first search algorithm.
The simple iterative-DFS algorithm
enumerates $\mathcal{F}_k$ for $k=0,1,2,\ldots,n$ until it finds
a inclusion-maximal irredundant set. Now we prove that it works
in \mintime{} time.

First, as a warm-up, let us make a $O^*((2-\varepsilon_\Delta)^n)$ bound for graphs with maximum degree bounded by $\Delta$,
where $\varepsilon_\Delta$ depends on $\Delta$. Construct a set $A \subset V$ greedily: repeatedly add any vertex $v \in V$ to $A$ and remove
from $V$ all vertices distant by at most $2$ from $v$. At each step, at most $1 + \Delta + \Delta(\Delta-1) = 1 + \Delta^2$ vertices are removed, therefore
$|A| \geq n / (1 + \Delta^2)$. The set $A$ is an independent set; moreover, closed neighbourhoods $\{\cN(v): v \in A\}$ are disjoint.
Note that any independent set $S$ cannot contain the whole $\cN(v)$ for any non-isolated vertex $v$, therefore:
$$|\mathcal{F}_n| \leq 2^n \prod_{v \in A} \frac{2^{|\cN(v)|}-1}{2^{|\cN(v)|}} \leq 2^n \left( \frac{2^{\Delta+1}-1}{2^{\Delta+1}}  \right)^\frac{n}{1+\Delta^2} =
(2-\varepsilon_\Delta)^n,$$
and the time bound for the algorithm follows.

\begin{wrapfigure}[12]{R}{5.5cm}
\vspace{-20pt}
\begin{center}
{\small{
\begin{tikzpicture}[scale=1.0]
  \draw[thick,rounded corners=8pt] (1.5, 0.0) -- (3.0, 0.0) -- (3.0, 4.0) -- (0.5, 4.0) -- (0.5, 0.0) -- (1.5, 0.0);
  \begin{scope}[shift={(0.5, 0.0)}]
  \draw[thick,rounded corners=8pt] (1.0, 0.25) -- (1.75, 0.25) -- (1.75, 2.75) -- (0.25, 2.75) -- (0.25, 0.25) -- (1.0, 0.25);
  \draw (1.0, 1.5) node {$D$};
  \end{scope}
  \draw (1.75, 3.35) node {$\cN(D)$};
  \draw[thick,rounded corners=8pt] (4.5, 0.0) -- (5.5, 0.0) -- (5.5, 4.0) -- (3.5, 4.0) -- (3.5, 0.0) -- (4.5, 0.0);
  \draw[thick,rounded corners=8pt] (5.0, 0.25) -- (5.25, 0.25) -- (5.25, 3.75) -- (4.5, 3.75) -- (4.5, 0.25) -- (5.0, 0.25);
  \draw (4.0, 3.35) node {$A_1$};
  \draw (4.875, 3.35) node {$A_2$};
  \fill (4.0, 0.3) circle (0.1);
  \fill (4.0, 2.8) circle (0.1);
  \fill (4.875, 0.8) circle (0.1);
  \fill (4.875, 1.3) circle (0.1);
  \fill (4.875, 1.8) circle (0.1);
  \fill (4.875, 2.3) circle (0.1);
  \draw[dashed] (4.0, 0.3) -- (4.875, 0.8);
  \draw[dashed] (4.0, 2.8) -- (4.875, 2.3);
  \fill (2.625, 0.9) circle (0.1);
  \fill (2.625, 1.55) circle (0.1);
  \fill (2.625, 2.2) circle (0.1);
  \draw[dashed] (2.625, 0.9) -- (4.0, 0.3);
  \draw[dashed] (2.625, 0.9) -- (4.875, 1.3);
  \draw[dashed] (2.625, 1.55) -- (4.875, 1.8);
  \draw[dashed] (2.625, 2.2) -- (4.875, 2.3);
  \draw[dashed] (2.625, 2.2) -- (4.0, 2.8);
\end{tikzpicture}
}}
\caption{Notation in the proof of Lemma \ref{strukt}}
\label{fig:smallirset}
\end{center}
\end{wrapfigure}
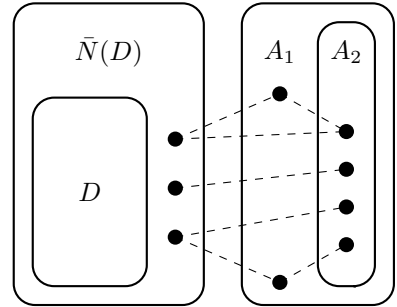

Now we show how to bypass the maximum degree assumption.
Note that if an irredundant set is a dominating set, it is inclusion--maximal.
Moreover an inclusion--minimal dominating set is an inclusion--maximal irredundant set.
Assume that $G$ admits a dominating set of size not greater than $149/300n$.
Then the algorithm stops before or at the
step $k=149/300n$ and up to this point consumes $O^*(\binom{n}{149/300n}) = O(1.999956^n)$ time.
Therefore we may consider only the case where
every dominating set in $G$ is of size greater than $149/300n$.

The following structural lemma is crucial for the analysis.

\begin{lemma}\label{strukt}
Let $G=(V, E)$ be a graph with $n$ vertices that contains
no dominating set of cardinality smaller than $149/300 n$. Then there exists
a set $A \subset V$ satisfying:
\begin{enumerate}
\item $A$ is an independent set and the neighbourhoods $\{\cN(v): v\in A\}$ are disjoint,
\item every vertex in $A$ has degree at most $6$,
\item $|A| \geq 41 n/9800$.
\end{enumerate}
\end{lemma}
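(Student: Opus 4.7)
The plan is to set $D = \{v \in V : \deg(v) \leq 6\}$ and $H = V \setminus D$, build $A \subseteq D$ greedily, and then use the hypothesis on the dominating number to force $A$ to be large. Specifically, start with $A = \emptyset$ and iteratively add any $v \in D$ satisfying $\cN(v) \cap \cN(A) = \emptyset$, stopping when no such vertex remains. By construction $A \subseteq D$, the closed neighbourhoods $\{\cN(v) : v \in A\}$ are pairwise disjoint (which also forces $A$ to be independent since $v \in \cN(v)$), and every vertex of $A$ has degree at most $6$. Hence properties~(1)--(3) are immediate, and the entire task is to prove the cardinality bound.

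The key consequence of maximality is that every $v \in D \setminus \cN(A)$ has at least one neighbour inside $\cN(A)$: otherwise $\cN(v) \cap \cN(A) = \emptyset$ and $v$ could have been added to $A$. I will use this to exhibit a small dominating set of $G$. Let $Y \subseteq V$ be any set dominating $H \setminus \cN(A)$ in $G$; then
\[
S \;:=\; \cN(A) \cup Y
\]
is a dominating set of $G$, because vertices in $\cN(A) \cup Y$ dominate themselves, each $v \in D \setminus \cN(A)$ is dominated via its neighbour in $\cN(A)$, and each $v \in H \setminus \cN(A)$ is dominated by $Y$. Since the $\cN(v)$ for $v \in A$ are pairwise disjoint of size at most $7$, we have $|\cN(A)| \leq 7|A|$, so $|S| \leq 7|A| + |Y|$.

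To bound $|Y|$ I will use a standard Arnautov--Payan-style probabilistic argument adapted to dominate a subset rather than the whole vertex set. Every $v \in H \setminus \cN(A)$ satisfies $|\cN(v)| \geq 8$ because $v$ has degree at least $7$. Sampling $Y_0 \subseteq V$ by including each vertex independently with probability $p$ and then setting $Y := Y_0 \cup \bigl((H \setminus \cN(A)) \setminus \cN(Y_0)\bigr)$, one obtains
\[
\mathbb{E}\bigl[|Y|\bigr] \;\leq\; pn + |H \setminus \cN(A)|(1-p)^{8}.
\]
Choosing $p = (\ln 8)/8$ and using $|H \setminus \cN(A)| \leq n$ yields, by the probabilistic method, the existence of $Y$ with $|Y| \leq ((\ln 8) + 1)\,n/8 < 0.386\,n$.

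Combining the two bounds, $|S| < 7|A| + 0.386\,n$, and the hypothesis $|S| > 149n/300$ forces $7|A| > (149/300 - 0.386)\,n$, which is comfortably larger than $7 \cdot 41n/9800 = 41n/1400$ by a direct arithmetic check, yielding the claim. The main obstacle I anticipate is making the dominating-set-for-a-subset step fully rigorous with polynomial-time construction (derandomising or replacing the probabilistic argument by a direct greedy) and tuning constants so that the precise value $41n/9800$ is tight --- the crude argument above is somewhat loose, which suggests the paper's proof likely refines the construction by additionally exploiting the partition of $V$ into $\cN(D)$ and $V \setminus \cN(D)$ hinted at by the figure, handling the two sides with different dominating-set recipes before combining them.
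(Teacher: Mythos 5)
Your proof is correct, but it takes a genuinely different route from the paper's. The paper never partitions $V$ by degree: it greedily builds a dominating set $D$ in which every added vertex must dominate at least $3$ new vertices, takes the set $A_1$ of vertices left undominated by $D$, extracts a maximal independent set $A_2 \subseteq A_1$, and combines the hypothesis with $|D| \leq \frac{1}{3}|V \setminus A_1|$ to get $|A_2| \geq 49n/200$; a counting argument (every vertex has at most two neighbours in $A_2$) then shows that at least $41n/1400$ vertices of $A_2$ have degree at most $6$, and a final greedy packing loses a factor of $7$, giving $41n/9800$. You instead cut $V$ at degree $6$, greedily pack disjoint closed neighbourhoods among the low-degree vertices, and argue by contradiction: $\cN(A)$ dominates all low-degree vertices by maximality of $A$, the high-degree leftovers have minimum degree at least $7$ and hence can be dominated by about $(1+\ln 8)n/8 < 0.386n$ vertices via the standard probabilistic (Arnautov--Payan) argument, so a small $A$ would produce a dominating set of size below $149n/300$. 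Your route is conceptually shorter and yields a substantially better constant ($|A| \geq (149/300 - 0.386)n/7 \approx 0.0158n$ versus $\approx 0.0042n$), which would even slightly sharpen the paper's final running-time bound; what it costs is the appeal to the probabilistic method where the paper stays purely greedy and elementary. Also, the two worries in your closing paragraph are unfounded: the lemma is purely existential and is used only in the running-time analysis (the algorithm never computes $A$), so no derandomisation or polynomial-time construction is required, and the constant need not be tight --- any bound meeting $41n/9800$ suffices, which yours comfortably exceeds.
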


\begin{proof}
We construct a dominating set $D$ greedily. Start with $D=\emptyset$.
In a single step, take any vertex $v$ that adds at least $3$ new vertices to $\cN(D)$, i.e.
$|\cN(D \cup \{v\}) \setminus \cN(D)| \geq 3$, and add $v$ to $D$.
This algorithm stops at some point and let $A_1 = V\setminus \cN(D)$,
i.e. the vertices not dominated by $D$.
For every vertex $v$ we have $|\cN(v) \cap A_1| \leq 2$, since $D$ cannot be extended any more.
In particular, every vertex in $G[A_1]$ has degree at most $1$, so
$G[A_1]$ is a graph of isolated vertices and isolated edges.
Let $A_2$ be any inclusion--maximal independent set in $G[A_1]$,
i.e. $A_2$ contains all isolated vertices of $G[A_1]$ and
one endpoint of every isolated edge. $A_2$ is an independent set in $G$, too.

Let us now note that $D \cup A_2$ is a dominating set in $G$, since $A_2$ dominates $A_1$.
Therefore $|D| + |A_2| \geq 149/300 n$. Note that, by the construction procedure
of $D$, we have $|D| \leq \frac{1}{3} |\cN(D)| = \frac{1}{3} |V \setminus A_1|$, so:
$$149/300 \leq \frac{|D| + |A_2|}{|V|} \leq \frac{1}{3} - \frac{|A_1|}{3|V|} + \frac{|A_2|}{|V|} \leq \frac{1}{3} + \frac{2}{3} \cdot \frac{|A_2|}{|V|}.$$
Therefore $|A_2| \geq 49/200 n$.

Now recall that every vertex in $V$ has at most two vertices from $A_1$ in its closed neighbourhood.
Therefore, every vertex in $V$ has at most two neighbours in $A_2$.
Let $n_7$ be the number of vertices in $A_2$ with degree at least $7$.
By counting edge endpoints we obtain that $7n_7 \leq 2(n - |A_2|) \leq 151/100 n$ and
$n_7 \leq 151/700 n$.
Let $A_3 \subset A_2$ be the set of vertices of degree at most $6$. Then $|A_3| \geq 41/1400 n$.

Now construct $A \subset A_3$ greedily. In a single step, add any $v \in A_3$ to $A$
and remove from $A_3$ the vertex $v$ and all vertices that share a neighbour with $v$ 
(recall that $A_3$ is an independent set).
Since the vertices in $A_3$ have degree at most $6$
and every vertex in $V$ is a neighbour of at most two vertices in $A_3$,
then at one step we remove at most $7$ vertices from $A_3$.
Therefore $|A| \geq 41/9800 n$.
\end{proof}
The bound for our iterative-DFS algorithm is now straightforward.
Note that for every non--isolated vertex $v$ at least one point from $\cN(v)$ does not belong to
an irredundant set. By Lemma \ref{strukt} we obtain $41/9800n$ disjoint sets $\{\cN(v): v \in A\}$,
such that all these sets are of size at most $7$ and no
$\cN(v)$ can be contained in an irredundant set.
Therefore the total number of irredundant sets is bounded by:
$$2^n \cdot \Big(\frac{2^7-1}{2^7}\Big)^\frac{41n}{9800} = O(1.99994^n).$$




\bibliographystyle{plain}
\bibliography{irredundant-stacs}

\begin{thebibliography}{10}

\bibitem{fomin:icalp09}
Omid Amini, Fedor~V. Fomin, and Saket Saurabh.
\newblock Counting subgraphs via homomorphisms.
\newblock In {\em Proc. ICALP'09}, pages 71--82, 2009.

\bibitem{bjohus:color}
Andreas Bj{\"o}rklund and Thore Husfeldt.
\newblock Inclusion--exclusion algorithms for counting set partitions.
\newblock In {\em Proc. FOCS'06}, pages 575--582, 2006.

\bibitem{bjohus:fourier}
Andreas Bj\"{o}rklund, Thore Husfeldt, Petteri Kaski, and Mikko Koivisto.
\newblock Fourier meets m\"{o}bius: fast subset convolution.
\newblock In {\em Proc. STOC'07}, pages 67--74, 2007.

\bibitem{bjorklund:tsp}
Andreas Bj{\"o}rklund, Thore Husfeldt, Petteri Kaski, and Mikko Koivisto.
\newblock The travelling salesman problem in bounded degree graphs.
\newblock In {\em ICALP'08}, pages 198--209, 2008.

\bibitem{bjorklund:trimmed}
Andreas Bj{\"o}rklund, Thore Husfeldt, Petteri Kaski, and Mikko Koivisto.
\newblock Trimmed moebius inversion and graphs of bounded degree.
\newblock In {\em STACS'08}, pages 85--96, 2008.

\bibitem{blp:capdomplanar}
Hans Bodlaender, Daniel Lokshtanov, and Eelko Penninkx.
\newblock Planar capacitated dominating set is {$W[1]$}-hard.
\newblock In {\em Proc. IWPEC'09 (to appear)}, 2009.

\bibitem{francuzy}
Nicolas Bourgeois, Bruno Escoffier, and Vangelis~Th. Paschos.
\newblock Efficient approximation by ``low-complexity'' exponential algorithms.
\newblock {\em Cahier du LAMSADE 271, LAMSADE, Universite Paris-Dauphine},
  2007.

\bibitem{henning:irr}
Ljiljana Brankovic, Henning Fernau, Joachim Kneis, Dieter Kratsch, Alexander
  Langer, Mathieu Liedloff, Daniel Raible, and Peter Rossmanith.
\newblock Private communication, 2009.

\bibitem{irsetbook}
Gary Chartrand and Linda Lesniak.
\newblock {\em Graphs and Digraphs}.
\newblock CRC Press, 2004.

\bibitem{kowalik:apx}
Marek Cygan, Lukasz Kowalik, and Mateusz Wykurz.
\newblock Exponential-time approximation of weighted set cover.
\newblock {\em Inf. Process. Lett.}, 109(16):957--961, 2009.

\bibitem{naszewg}
Marek Cygan and Marcin Pilipczuk.
\newblock Faster exact bandwidth.
\newblock In {\em Proc. WG'08}, pages 101--109, 2008.

\bibitem{naszicalp}
Marek Cygan and Marcin Pilipczuk.
\newblock Exact and approximate bandwidth.
\newblock In {\em Proc. ICALP'09}, pages 304--315, 2009.

\bibitem{dom:capdomparam}
Michael Dom, Daniel Lokshtanov, Saket Saurabh, and Yngve Villanger.
\newblock Capacitated domination and covering: A parameterized perspective.
\newblock In {\em Proc. IWPEC'08}, pages 78--90, 2008.

\bibitem{downey:irset}
Rodney~G. Downey, Michael~R. Fellows, and Venkatesh Raman.
\newblock The complexity of irredundant sets parameterized by size.
\newblock {\em Discrete Applied Mathematics}, 100(3):155--167, 2000.

\bibitem{feige:exp}
Uriel Feige.
\newblock Coping with the {NP}-hardness of the graph bandwidth problem.
\newblock In {\em Proc. SWAT'00}, pages 10--19, 2000.

\bibitem{feigetalwar:bw}
Uriel Feige and Kunal Talwar.
\newblock Approximating the bandwidth of caterpillars.
\newblock In {\em APPROX-RANDOM}, pages 62--73, 2005.

\bibitem{fgk:domset}
Fedor~V. Fomin, Fabrizio Grandoni, and Dieter Kratsch.
\newblock Measure and conquer: Domination - a case study.
\newblock In {\em Proc. ICALP'05}, pages 191--203, 2005.

\bibitem{fgk:mis}
Fedor~V. Fomin, Fabrizio Grandoni, and Dieter Kratsch.
\newblock Measure and conquer: a simple {$O(2^{0.288n})$} independent set
  algorithm.
\newblock In {\em Proc. SODA'06}, pages 18--25, 2006.

\bibitem{dagstuhl}
Fedor~V. Fomin, Kazuo Iwama, and Dieter Kratsch.
\newblock Moderately exponential time algorithms, {D}agstuhl seminar, 2008.

\bibitem{gaspers:bw}
Martin F{\"u}rer, Serge Gaspers, and Shiva~Prasad Kasiviswanathan.
\newblock An exponential time 2-approximation algorithm for bandwidth.
\newblock In {\em Proc. IWPEC'09, to appear}, 2009.

\bibitem{muchasank:matching}
Marcin Mucha and Piotr Sankowski.
\newblock Maximum matchings via gaussian elimination.
\newblock In {\em Proc. FOCS'04}, pages 248--255, 2004.

\bibitem{nederlof:moebius}
Jesper Nederlof.
\newblock Fast polynomial-space algorithms using m{\"o}bius inversion:
  Improving on steiner tree and related problems.
\newblock In {\em Proc. ICALP'09}, pages 713--725, 2009.

\bibitem{raman-saket:irset}
Venkatesh Raman and Saket Saurabh.
\newblock Short cycles make {$W[1]$}-hard problems hard: {FPT} algorithms for
  {$W[1]$}-hard problems in graphs with no short cycles.
\newblock {\em Algorithmica}, 52(2):203--225, 2008.

\bibitem{schiermeyer:domset}
Ingo Schiermeyer.
\newblock Efficiency in exponential time for domination-type problems.
\newblock {\em Discrete Applied Mathematics}, 156(17):3291--3297, 2008.

\bibitem{rooij:domsetesa09}
Johan M.~M. van Rooij, Jesper Nederlof, and Thomas~C. van Dijk.
\newblock Inclusion/exclusion meets measure and conquer: Exact algorithms for
  counting dominating sets.
\newblock In {\em Proceedings of {ESA}'09}, pages 554--565, 2009.

\bibitem{woeginger}
Gerhard~J. Woeginger.
\newblock Exact algorithms for {NP}-hard problems: A survey.
\newblock In {\em Combinatorial Optimization}, pages 185--208, 2001.

\end{thebibliography}

\end{document}